\newcommand{\field}[1]{\mathbb{#1}}
\newcommand{\N}{\field{N}}
\newcommand{\R}{\field{R}}
\newcommand{\HH}{\mathscr H}
\newcommand{\LL}{\mathscr L}
\newcommand{\FF}{\mathcal F}
\newcommand{\ph}{\varphi}
\newcommand{\const}{\mathrm{const}}
\renewcommand{\ker}{\mathrm{Ker}}
\newcommand{\norm}[1]{\mbox{$\left\| #1 \right\|$}}           
\newcommand{\sprod}[2]{\langle #1,#2 \rangle}        
\newcommand{\shifted}[1]{\widehat{#1}}
\newcommand{\shiftedB}[1]{\widetilde{#1}}
\newcommand{\sgn}{\operatorname{sgn}}
\newtheorem{thm}{Theorem}
\newtheorem{cor}[thm]{Corollary}
\newtheorem{lemma}[thm]{Lemma}
\newtheorem*{remarks}{Remarks}
\font\notefont=cmsl8 \pagestyle{myheadings}
\title{Stability of the two-dimensional Fermi polaron}
\author{Marcel~Griesemer\footnote{marcel.griesemer@mathematik.uni-stuttgart.de}\ and Ulrich~Linden\footnote{ulrich.linden@mathematik.uni-stuttgart.de}\\  
\small Fachbereich Mathematik, Universit\"at Stuttgart, D-70569 Stuttgart, Germany}  
\begin{document}
\maketitle
\begin{abstract}
   A system composed of an ideal gas of $N$ fermions interacting with an impurity particle in two space dimensions is considered. The interaction between impurity and fermions is given in terms of two-body point interactions whose strength is determined by the two-body binding energy, which is a free parameter of the model. If the mass of the impurity is 1.225 times larger than the mass of a fermion, it is shown that the energy is bounded below uniformly in the number $N$ of fermions. This result improves previous, $N$-dependent lower bounds and it complements a recent, similar bound for the Fermi polaron in three space dimensions.
\end{abstract}

\section{Introduction}


The system considered in this paper is composed of an ideal gas of $N$ fermions and one additional particle, called impurity, in two space dimensions. The impurity interacts with the fermions by two-body point interactions. Informally, the Hamiltonian of the system may thus be written as 
\begin{equation} \label{Formal_Expression_1}
   - \frac{1}{M} \Delta_y - \sum_{i=1}^N \Delta_{x_i} - g \sum_{i=1}^N \delta(x_i - y),
\end{equation}
where $M>0$ is the mass of the impurity and $g$ plays the role of a coupling constant. The problem of defining a self-adjoint 
Hamiltonian describing 
\eqref{Formal_Expression_1} is discussed and solved in \cite{Figari,DR}. We are interested in its ground state energy, and we show, in this paper, that it is 
bounded below uniformly in $N$, provided that $M > 1.225$. In the physics literature the system described above is called Fermi polaron 
\cite{Bruun_Review}. It is a model for an ultra-cold gas of fermionic atoms interacting with an additional, impurity atom. One is interested in the 
form of the ground state as a function of the coupling strength and in two space dimensions one expects a sharp transition, related to the BEC-BCS crossover 
\cite{Nature,ParishLevinsen}. 

Our approach for defining a self-adjoint Hamiltonian describing \eqref{Formal_Expression_1} follows \cite{Figari,DR} and it is described in \cite{Thesis}. Here we only summarise the ingredients and facts needed in this paper. Some more details are given in the appendix. We work in the language of second quantisation. Let $a_k$ and $a_k^*$ denote the usual annihilation and creation operators in antisymmetric Fock space over $L^2(\R^2)$. Let  $H_f = \int k^2 a_k^* a_k\, dk$ and  $P_f = \int k a_k^* a_k\, dk$. The main operator to be analysed in this paper is not the Hamiltonian of the system, which is not available explicitly, but a self-adjoint operator $\phi(E)$ in $\HH_{N-1}=\bigwedge^{N-1}L^2(\R^2)$, depending on a parameter $E<0$ and defined by
\begin{equation} \label{def:Phi}
   \phi(E) = \alpha + \phi^0(E) + \phi^I(E),
\end{equation}
where $\alpha\in \R$ and 
\begin{align}
  \phi^0(E) &:= \frac{\pi}{1+\frac{1}{M}}\log\left(\frac{1}{M+1} P_f^2 + H_f - E \right)\label{def:Phi_0}\\
  \phi^I(E) &:= \int \! dp \, dq \: a_p^* \: \frac{1}{\frac{1}{M}(P_f + p + q)^2 + H_f + p^2 + q^2 - E} \: a_q.\label{def:Phi_I}
\end{align}
Here $\alpha$ is a free parameter of the model that parametrises the coupling strength between fermions and impurity. To give it a physical interpretation we 
mention that $\alpha = - (\pi/(1+M^{-1})) \cdot \log |E_B|$, where $E_B < 0$ is the ground state energy of the two-body system consisting of only one fermion 
and the impurity. A negative energy state is always present in the two-body system in two dimensions \cite{SolvableModels}. The point 
about $\phi(E)$ is that, for $E < 0$, 
\begin{equation}\label{Criterion_Lower_Bound}
   \phi(E) \geq 0 \quad \Rightarrow \quad H_N \geq E
\end{equation}
where $H_N$ denotes the self-adjoint realisation of \eqref{Formal_Expression_1} after separating off the center-of-mass motion (see Appendix~\ref{sec:appendix}). The main result of this paper, Theorem \ref{Main_Theorem} below, provides us with a number $E$, depending on $M$ only, such that $\phi(E) \geq 0$ for $M>1.225$. This implies, by \eqref{Criterion_Lower_Bound}, that $H_N \geq E$ uniformly in $N$. It is an open problem, whether or not an $N$-independent lower bound on $H_N$ exists for arbitrary positive values of $M$. For a possible approach to this problem see \cite{Thesis}.

For arbitrary $M>0$ the Hamiltonian $H_N$ is bounded below, but the lower bound may depend on $N$: using that $\phi^I(E)$ is bounded with $\norm{\phi^I(E)} \leq 
\const\cdot (N-1)$ and $\phi^0(E) \geq (\pi/(1+M^{-1})) \cdot \log (-E)$, we conclude, by \eqref{Criterion_Lower_Bound}, that
\begin{equation}\label{LB-N}
   H_N \geq E_B \cdot \exp(C \cdot (1 + M^{-1}) \cdot (N-1)).
\end{equation}
This result can already be inferred from \cite{Figari}.

In three space dimensions, \eqref{Criterion_Lower_Bound} still holds with $\phi(E)$ defined by \eqref{def:Phi}, \eqref{def:Phi_I}, see \cite{Figari, MoserSeiringer},  and 
$$
   \phi^0(E) = \frac{2 \pi^2}{(1 + \frac{1}{M})^{3/2}} \cdot \sqrt{\frac{1}{M+1} P_f^2 + H_f - E}.
$$
In contrast to the two-dimensional case, however, the operator $\phi^I(E)$ is not bounded anymore. Instead, from \cite{Five_Italians} it follows that
\begin{equation}\label{phi-bound}
   \phi^{I}(E) \geq -C(M,N) \cdot \phi^0(E),
\end{equation}
which implies $\phi(E) \geq \alpha + (1 - C(M,N)) \phi^0(E)$. Provided that $C(M,N)<1$, we may choose $|E|$ large enough, using that 
$\phi^0(E) \geq (2 \pi^2/(1 + M^{-1})^{3/2})\sqrt{- E}$, such that $\phi(E)\geq 0$ and hence $H_N \geq E$. This improves an earlier result 
\cite{Minlos_N+1,Minlos_N+1+}. The condition $C(M,N)<1$ is satisfied if $M$ is larger than the critical mass $M^*(N)$ defined by $C(M^*(N), N) = 1$. In 
\cite{Five_Italians}, \eqref{phi-bound} is shown for a function $C(M,N)$ for which $M^*(N) \propto N$ as $N\to\infty$, and, moreover, it is shown that 
$\phi(E)$ is unbounded from 
below if $N \geq 2$ and $M < M^*(2) \approx 0.0735$. Recently, \eqref{phi-bound} was shown to hold with a constant $C(M)$ that is independent of $N$ 
\cite{MoserSeiringer}. This constant satisfies $C(M)<1$ if $M>M^*\simeq 0.36$. It follows that $H_N$ is bounded below uniformly in $N$ provided that $M>0.36$. 
In fact, $H_N \geq 0$ if $\alpha \geq 0$ and $H_N \geq - \tilde{C}(M) \cdot \alpha^2$ for a constant $\tilde{C}(M) > 0$ if $\alpha < 0$. While the present 
paper, on a technical level (i.e. in the proof of \Cref{Bound_MS} below), has strongly benefited from \cite{MoserSeiringer}, it solves an additional infrared 
problem, which arises due to the lower dimension. The main result can also be found in \cite{Thesis}.  

\section{An $N$-independent lower bound for the Fermi polaron in $\R^2$}

\begin{thm}\label{Main_Theorem}
 Let $E_B < 0$. Set
\begin{equation}\label{Definition_Alpha}
   \alpha(M) := \frac{1}{2(M+1)} + \frac{1}{2} \int\limits_0^1\frac{1}{\beta(u) (M + 1 - u)}\, du,
\end{equation}
where
\begin{equation}\label{Definition_Beta}
   \beta(u) := \min \left\{ 1, \frac{(M + 1 - u)(M + 2)}{M^2 + 3M + 1 - u} \right\},
\end{equation}
and suppose that $\alpha(M) < M/(M+1)$, which is satisfied if $M > 1.225$. Then, for every $\lambda > 0$, the unique solution $\mu < 0$ of the equation
\begin{equation}\label{Equation_for_Lower_Bound}
  \left( \frac{M}{M+1} - \alpha(M) \right) \log \left( \frac{\mu}{E_B} \right) - \sqrt{\frac{\lambda}{-\mu}} - \sqrt{\frac{\lambda}{\lambda - \mu}} - \alpha(M) \log\left( E_B \left( \frac{1}{\mu} - \frac{1}{\lambda} \right) \right) - \alpha(M) = 0
\end{equation}
satisfies $\phi(\mu)\geq 0$ and hence $H_N \geq \mu$ for all $N \in \N$.
\end{thm}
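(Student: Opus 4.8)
The plan is to use the reduction criterion \eqref{Criterion_Lower_Bound}: it suffices to exhibit a single $\mu<0$ with $\phi(\mu)\ge 0$, and then $H_N\ge\mu$ for all $N$ follows at once. Writing $\phi(\mu)=\alpha+\phi^0(\mu)+\phi^I(\mu)$, the free part is explicit and harmless. Since $\frac{1}{M+1}P_f^2+H_f\ge 0$, operator monotonicity of $\log$ and $\alpha=-\frac{\pi M}{M+1}\log|E_B|$ give $\alpha+\phi^0(\mu)\ge \frac{\pi M}{M+1}\log\!\big((-\mu)/|E_B|\big)=\frac{\pi M}{M+1}\log(\mu/E_B)$. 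The whole difficulty is therefore a sharp lower bound on the interaction term $\phi^I(\mu)$, which is genuinely indefinite: already for one fermion its kernel $[\frac1M(p+q)^2+p^2+q^2-\mu]^{-1}$ fails to be positive definite (the cross term $p\cdot q$ breaks sign-definiteness), so the negative contributions must be controlled quantitatively.

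First I would establish an operator lower bound on $\phi^I(\mu)$ in the spirit of \cite{MoserSeiringer}, reducing the indefinite double integral over $(p,q)$ to a diagonal, $\phi^0$-comparable operator. The mechanism is a Schur/Cauchy--Schwarz estimate carried out after introducing a Feynman-type parameter $u\in[0,1]$ that disentangles the energy denominators packaged in the resolvent $R(p,q)=[\frac1M(P_f+p+q)^2+H_f+p^2+q^2-\mu]^{-1}$. For each $u$ the off-diagonal cross term is absorbed into a diagonal expression with weight $\big(\beta(u)(M+1-u)\big)^{-1}$; the $\min$ in \eqref{Definition_Beta} selects, for each $u$, the better of two elementary bounds, with crossover at $u=1/(M+1)$. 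Integrating over $u$ and adding the direct, $u$-independent contribution $\frac1{2(M+1)}$ produces exactly the constant $\alpha(M)$ of \eqref{Definition_Alpha}, so that $\phi^I(\mu)\ge -\pi\alpha(M)\,\log(\cdots)-(\text{errors})$, where $\log(\cdots)$ is comparable to the operator inside $\phi^0$.

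The new, genuinely two-dimensional ingredient is the infrared regularisation governed by $\lambda>0$. In $\R^2$ the relevant integrals diverge logarithmically at low energy, so before the Cauchy--Schwarz step I would split $R(p,q)$ according to whether the internal energy lies below or above $\lambda$. The bulk (high-energy) part matches the logarithm of $\phi^0$ and contributes the coefficient $-\alpha(M)$ to $\log(\mu/E_B)$; the low-energy remainder, estimated crudely, produces the square-root errors $\sqrt{\lambda/(-\mu)}$ and $\sqrt{\lambda/(\lambda-\mu)}$ together with the extra logarithmic correction $\alpha(M)\log\!\big(E_B(1/\mu-1/\lambda)\big)$. Taking the worst case over the spectrum of $\frac1{M+1}P_f^2+H_f$, attained at its bottom and hence precisely in the infrared, the operator inequality $\phi(\mu)\ge 0$ reduces, after dividing by $\pi$, to the scalar requirement that the left-hand side of \eqref{Equation_for_Lower_Bound} be nonnegative.

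It then remains to analyse that scalar function of $\mu$. Because $\frac{M}{M+1}-\alpha(M)>0$ under the hypothesis, the left-hand side of \eqref{Equation_for_Lower_Bound} tends to $+\infty$ as $\mu\to-\infty$ and to $-\infty$ as $\mu\to 0^-$, and a monotonicity check yields a unique zero $\mu<0$; at this $\mu$ (and below) the reduced inequality holds, so $\phi(\mu)\ge 0$. Finally I would verify the threshold by estimating the integral in \eqref{Definition_Alpha}, showing $\alpha(M)<M/(M+1)$ for $M>1.225$. The main obstacle is the second-and-third step: obtaining the lower bound on the indefinite operator $\phi^I$ with the \emph{sharp} constant $\alpha(M)$ while simultaneously regularising the $\R^2$ infrared behaviour, so that the coefficient of the leading logarithm comes out exactly as $\frac{M}{M+1}-\alpha(M)$ and stays positive.
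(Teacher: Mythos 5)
Your overall strategy coincides with the paper's: reduce everything to $\phi(\mu)\ge 0$ via \eqref{Criterion_Lower_Bound}, bound the negative part of the off-diagonal kernel by a Schur-type test with weight $h(p)=p^2$ following Moser--Seiringer, and you even reconstruct the anatomy of $\alpha(M)$ correctly (the $u$-independent piece $\tfrac{1}{2(M+1)}$ from the half of the $u$-interval where the cross term has a good sign, the $u$-integral from the other half, and the crossover of $\beta$ at $u=1/(M+1)$). The scalar endgame --- worst case at the bottom of the spectrum of $\tfrac{1}{M+1}P_f^2+H_f$, then monotonicity in $\mu$ --- also matches the paper's Remarks (i)--(ii) and its final proof.

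There is, however, a genuine gap in precisely the step the paper singles out as the new two-dimensional ingredient: the infrared treatment. You propose to ``split $R(p,q)$ according to whether the internal energy lies below or above $\lambda$'' and to estimate the low-energy remainder ``crudely''. As stated this fails: the low-momentum part of $\phi^I(\mu)$ is still of the form $\int a_p^*\,R(p,q)\,a_q$ with a kernel depending jointly on $p$ and $q$, and a crude Cauchy--Schwarz/norm estimate on it produces bounds growing with $N$ (and with the ultraviolet cutoff $n$), not the $N$-independent errors $\sqrt{\lambda/(-\mu)}$ and $\sqrt{\lambda/(\lambda-\mu)}$. What the paper actually does is split the momentum \emph{domain} at $p^2,q^2=\lambda$ and \emph{anti-normal-order} the infrared contributions using the canonical anticommutation relations together with the pull-through formulas \eqref{Pull_Through_Formula_Entire_Space}. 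This yields (a) a positive diagonal term $\int_{p^2\le\lambda}dp\,[\tfrac1M(P_f+p)^2+H_f+p^2-\mu]^{-1}$, which is discarded, and (b) two cross terms of the factorized form $a(\chi_\lambda)(\tfrac1M P_f^2+H_f-\mu)^{-1}a^*(\chi_n)$ (and its counterpart), in which the resolvent no longer depends on $p,q$; only then do \Cref{Bound_Easy_Terms} and $\norm{a^*(\chi_\lambda)}=\sqrt{\pi\lambda}$ give the two square-root errors. This is also exactly where the fermionic statistics enter (Remark (iv)); your sketch never invokes them. Two further bookkeeping slips: the term $-\alpha(M)\log(E_B(1/\mu-1/\lambda))$ does not come from the low-energy remainder but from combining $\phi^0(\mu)$ with the bound $P\ge-\pi\alpha(M)\bigl(1+\log(1+(H_f-\mu)/\lambda)\bigr)$ on the \emph{high}-momentum part and minimizing over the spectrum; and your opening reduction $\alpha+\phi^0(\mu)\ge\tfrac{\pi M}{M+1}\log(\mu/E_B)$ is premature, since the operator $\log(\tfrac1{M+1}P_f^2+H_f-\mu)$ must be retained to absorb the $H_f$-growing logarithm before the worst case is taken.
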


\begin{remarks} \leavevmode
 \begin{enumerate}[label=(\roman*)]
   \item The left hand side of \eqref{Equation_for_Lower_Bound} can be written as
$$
   \frac{M}{M+1} \log \left( \frac{\mu}{E_B} \right) - \sqrt{\frac{\lambda}{-\mu}} - \sqrt{\frac{\lambda}{\lambda - \mu}} - \alpha(M) \log \left( 1 - \frac{\mu}{\lambda} \right) - \alpha(M),
$$
which is obviously negative for $E_B \leq \mu < 0$. Thus, all solutions of \eqref{Equation_for_Lower_Bound} satisfy $\mu < E_B$.
   \item For fixed $E_B < 0$ and $\lambda > 0$ the left hand side of \eqref{Equation_for_Lower_Bound} is a strictly monotonically decreasing function of $\mu$ on the interval $(-\infty, E_B]$. It tends to $+\infty$ as $\mu \to -\infty$ and it attains a negative value for $\mu = E_B$. Thus, there is a unique solution $\mu < E_B$ of \eqref{Equation_for_Lower_Bound} for fixed $\lambda$ and $E_B$.
   \item The choice of the parameter $\lambda > 0$ is an opportunity for optimization of the lower bound $\mu$.
   \item The fermionic nature of the $N$ identical particles enters the model through the antisymmetric product in the definition of $\HH_N$ and $\HH_{N-1}$ 
and the sign of $\phi^I(E)$ in \eqref{def:Phi}. In fact in the case of $N$ identical bosons, in which an $N$-independent lower bound for $H_N$ cannot be 
established, $\phi^I(E)$ has to be replaced by $-\phi^I(E)$. Therefore, it is important to consider only the negative part of $\phi^I(E)$ when deriving a 
lower bound for $\phi(E)$ in the proof of \Cref{Main_Theorem}. 
 \end{enumerate}
\end{remarks}

Choosing $\lambda = -E_B$ in \Cref{Main_Theorem}, \eqref{Equation_for_Lower_Bound} turns into an equation for $M$ and $\mu/E_B$ only and we obtain the 
following 
statement.

\begin{cor}
 Let $E_B < 0$ and assume that $\alpha(M) < M/(M+1)$. Then,
$$
   H_N \geq \gamma_M \cdot E_B,
$$
where $\gamma_M > 1$ depends on $M$ only and is defined as the unique positive solution of 
\begin{equation*}
  \left( \frac{M}{M+1} - \alpha(M) \right) \log \left( \gamma_M \right) - \frac{1}{\sqrt{\gamma_M}} - \frac{1}{\sqrt{1 + \gamma_M}} - \alpha(M) \log\left( 1 + \frac{1}{\gamma_M} \right) = \alpha(M).
\end{equation*}
\end{cor}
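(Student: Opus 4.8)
By the criterion \eqref{Criterion_Lower_Bound} the reduction is immediate: it suffices to establish the operator inequality $\phi(\mu)\geq 0$ on $\HH_{N-1}$ for the value $\mu$ singled out by \eqref{Equation_for_Lower_Bound}, and then $H_N\geq\mu$ for all $N\in\N$ follows at once. The plan is therefore to bound the quadratic form $\sprod{\psi}{\phi(\mu)\psi}$ from below for an arbitrary normalised $\psi\in\HH_{N-1}$, writing the bound as a function of the joint spectral variable of the nonnegative energy operator $T:=\tfrac1{M+1}P_f^2+H_f$, and then to minimise over $T\geq 0$. The goal is to arrange the estimate so that this minimum equals exactly $\pi$ times the left-hand side of \eqref{Equation_for_Lower_Bound}, which vanishes by the very definition of $\mu$.

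First I would split the interaction \eqref{def:Phi_I} into a diagonal and an off-diagonal part, $\phi^I(\mu)=\phi^I_{\mathrm d}(\mu)+\phi^I_{\mathrm{od}}(\mu)$, according to whether the annihilated momentum $q$ and the created momentum $p$ act on the same fermion or on two distinct ones. The diagonal part is of the form $\int dp\,a_p^*\,B(p)\,a_p$ with $B(p)\geq 0$ and is therefore a nonnegative operator; in the spirit of Remark (iv) it may simply be discarded, giving $\phi^I(\mu)\geq\phi^I_{\mathrm{od}}(\mu)$, so that the whole risk of a negative contribution is carried by the exchange term. Meanwhile, using $\alpha=-(\pi/(1+M^{-1}))\log|E_B|$ together with \eqref{def:Phi_0}, the free part assembles into $\alpha+\phi^0(\mu)=\tfrac{\pi M}{M+1}\log\bigl((T-\mu)/|E_B|\bigr)$, whose value at the bottom $T=0$ is the positive logarithm $\tfrac{\pi M}{M+1}\log(\mu/E_B)$ appearing in \eqref{Equation_for_Lower_Bound}.

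The heart of the argument is a lower bound on $\phi^I_{\mathrm{od}}(\mu)$, the two-dimensional analogue of the three-dimensional estimate \eqref{phi-bound}, obtained by a Schur / Cauchy--Schwarz test on the exchange kernel. I would write $\phi^I_{\mathrm{od}}(\mu)$ as a sesquilinear form in a one-particle function built from $\psi$ and test it against a suitable radial weight; this reduces matters to a pointwise estimate of the kernel, whose angular and radial integration produces the weight $\beta(u)$ of \eqref{Definition_Beta}, the $\min$ there reflecting the two regimes in which different test weights are optimal, while the remaining $u$-integral together with an endpoint term yields precisely the constant $\alpha(M)$ of \eqref{Definition_Alpha}. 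The genuinely new difficulty, absent in three dimensions, is the infrared sector: in $\R^2$ the kernel decays only logarithmically, so the crude bound $\phi^I_{\mathrm{od}}\geq-\pi\alpha(M)\log(T-\mu)$ is inadequate near zero momentum. I would remedy this by introducing the scale $\lambda>0$, separating the momenta into a low-energy sector (below $\sqrt\lambda$) and a high-energy sector, controlling the high sector by $-\pi\alpha(M)\log(1-\mu/\lambda)$ and the infrared-sensitive low sector by the two square-root remainders $-\pi\sqrt{\lambda/(-\mu)}$ and $-\pi\sqrt{\lambda/(\lambda-\mu)}$. This last step is where I expect the main obstacle to lie, both in preserving the sharp constant $\alpha(M)$ in front of the logarithm and in checking that the low-energy remainder is genuinely of square-root type and bounded uniformly in $N$.

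Finally I would collect the pieces: combining $\sprod{\psi}{\phi(\mu)\psi}\geq\alpha+\phi^0(\mu)+\phi^I_{\mathrm{od}}(\mu)$ with the estimate above gives a lower bound that is a function of $T$ whose leading logarithm carries the coefficient $\tfrac{M}{M+1}-\alpha(M)$. Under the hypothesis $\alpha(M)<M/(M+1)$ this coefficient is positive, so the minimum over $T\geq 0$ is attained at $T=0$, and there the bound equals exactly $\pi$ times the left-hand side of \eqref{Equation_for_Lower_Bound}. Since $\mu$ is chosen to make that expression vanish, we conclude $\phi(\mu)\geq 0$, and \eqref{Criterion_Lower_Bound} then yields $H_N\geq\mu$ for all $N$. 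It remains only to verify, by a direct estimate of $\alpha(M)$ via \eqref{Definition_Alpha}--\eqref{Definition_Beta}, that the condition $\alpha(M)<M/(M+1)$ indeed holds once $M>1.225$.
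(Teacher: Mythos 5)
There is a genuine gap, and it stems from a misreading of what this corollary asks for. In the paper the corollary is an immediate specialization of \Cref{Main_Theorem}: one sets $\lambda=-E_B$ and $\mu=\gamma_M E_B$ in \eqref{Equation_for_Lower_Bound}, checks that $\log(\mu/E_B)=\log\gamma_M$, $\sqrt{\lambda/(-\mu)}=1/\sqrt{\gamma_M}$, $\sqrt{\lambda/(\lambda-\mu)}=1/\sqrt{1+\gamma_M}$ and $E_B(1/\mu-1/\lambda)=1+1/\gamma_M$, so that \eqref{Equation_for_Lower_Bound} becomes exactly the displayed equation for $\gamma_M$; uniqueness and $\gamma_M>1$ then follow from Remarks (i) and (ii) (every solution of \eqref{Equation_for_Lower_Bound} satisfies $\mu<E_B$, i.e. $\gamma_M>1$ since $E_B<0$). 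Your proposal never performs this reduction: the equation defining $\gamma_M$ is never derived, the choice $\lambda=-E_B$ is never made, and the claim $\gamma_M>1$ is never addressed. Instead you set out to re-prove $\phi(\mu)\geq 0$ from scratch, i.e. to re-derive \Cref{Main_Theorem}.

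Even judged as a re-derivation of the theorem, the proposal is a plan rather than a proof. The two quantitative ingredients that make the result non-trivial are left as declared intentions: the Schur-test bound on the exchange term that produces the precise constants $\beta(u)$ and $\alpha(M)$ (the content of \Cref{Bound_MS}, via the explicit negative part $\sigma^-$ of the kernel and the rearrangement step), and the infrared estimate that produces the two square-root terms (the content of \eqref{Phi_Rewritten} together with \Cref{Bound_Easy_Terms}, where the low-momentum block is anti-normal-ordered so that the cross terms are controlled by $\norm{a^*(\chi_\lambda)}=\sqrt{\pi\lambda}$). You explicitly flag the latter as ``where I expect the main obstacle to lie,'' which is an accurate self-assessment but not a proof. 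Two smaller inaccuracies: the split of $\phi^I$ is by momentum regions relative to the cutoff $\lambda$, not by ``same fermion versus distinct fermions''; and the final minimization is not simply ``evaluate at $T=0$,'' because the lower bound contains one logarithm increasing and one decreasing in $H_f$, which must first be combined (as in the last display of the paper's proof) before the worst case $P_f=0$, $H_f=0$ can be taken. Finally, the verification that $\alpha(M)<M/(M+1)$ for $M>1.225$ is a hypothesis of the corollary, not something to be proved here.
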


\section{Proofs}

This section is devoted to the proof of \Cref{Main_Theorem}.  We show that $\pi$ times the left hand side of \eqref{Equation_for_Lower_Bound} is a lower bound 
for $\phi(\mu)$ if $\mu<0$. In view of \eqref{Criterion_Lower_Bound}, we then obtain $H_N\geq \mu$ for every solution $\mu$ of \eqref{Equation_for_Lower_Bound}. 
For $r > 0$ let $\chi_r := \chi_{B(0,\sqrt{r})}$ be the characteristic function of the ball $B(0,\sqrt{r})\subset \R^2$. The parameter $\lambda > 0$ is fixed in 
the following and plays the role of an infrared cutoff. For $p^2\leq \lambda$ and $q^2\leq \lambda$ we rewrite the part \eqref{def:Phi_I} of the operator 
$\phi(\mu)$, making use of the pull-through formulas
\begin{equation} \label{Pull_Through_Formula_Entire_Space}
   a_p f(P_f) = f(P_f + p) a_p \qquad \textrm{and} \qquad a_p g(H_f) = g(H_f + p^2) a_p, 
\end{equation}
and the canonical anti-commutation relations, in such a way that
\begin{align}
   \phi(\mu) &= \frac{\pi}{1+\frac{1}{M}} \log \left( \frac{\frac{1}{M+1} P_f^2 + H_f - \mu}{-E_B} \right) + \int\limits_{p^2 \leq \lambda} \!\!\! dp \: \frac{1}{\frac{1}{M} (P_f + p)^2 + H_f + p^2 - \mu} \nonumber\\*
   &\qquad - a(\chi_\lambda) \frac{1}{\frac{1}{M}P_f^2 + H_f - \mu} a^*(\chi_n) - a(\chi_n-\chi_\lambda) \frac{1}{\frac{1}{M}P_f^2 + H_f - \mu} a^*(\chi_\lambda) \nonumber\\*
   &\qquad + \int\limits_{\lambda < p^2, q^2 \leq n} \!\!\!\!\! dp \, dq \: a_p^* \frac{1}{\frac{1}{M} (P_f + p + q)^2 + H_f + p^2 + q^2 - \mu} a_q + o(1) \label{Phi_Rewritten}
\end{align}
as $n \to \infty$. The remainder term converges to zero strongly. Here and in the following $\lambda < p^2, q^2 \leq n$ means that $\lambda < p^2\leq n$ \emph{and}  $\lambda < q^2\leq n$.
The first two terms of \eqref{Phi_Rewritten} are positive for $\mu < E_B$. The last three terms of \eqref{Phi_Rewritten} are estimated, uniformly in $n$ and $N$, in \Cref{Bound_Easy_Terms} and \Cref{Bound_MS}, below. 
	
\begin{lemma}\label{Bound_Easy_Terms}
For $n > \lambda \geq 0$ and $\mu < 0$, \vspace{-0.07cm}
$$
   \norm{(\frac{1}{M}P_f^2 + H_f - \mu)^{-1} a^*(\chi_n-\chi_\lambda)} \leq \sqrt{\frac{\pi}{\lambda - \mu}}.
$$
\end{lemma}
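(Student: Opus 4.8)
The plan is to estimate $\norm{A^{-1}a^*(g)\psi}^2 = \langle a^*(g)\psi,\, A^{-2}a^*(g)\psi\rangle$ for an arbitrary $\psi$, where I abbreviate $A := \frac1M P_f^2 + H_f - \mu$ and $g := \chi_n - \chi_\lambda$, the characteristic function of the annulus $\lambda < |p|^2 \le n$. The decisive first move is to discard the total-momentum kinetic term. Since $\frac1M P_f^2 \ge 0$ commutes with $H_f$ and $\mu < 0$, one has $A \ge H_f - \mu > 0$, and because the two operators commute this upgrades to the operator inequality $A^{-2} \le (H_f - \mu)^{-2}$ (squaring the inverse preserves the order only thanks to commutativity). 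It therefore suffices to prove $\norm{(H_f-\mu)^{-1}a^*(g)\psi} \le \sqrt{\pi/(\lambda-\mu)}\,\norm{\psi}$, i.e.\ to bound the same quantity with the free field energy alone.

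Next I would represent $(H_f-\mu)^{-2} = \int_0^\infty t\,e^{-t(H_f-\mu)}\,dt$, which turns the square into
\[
   \norm{(H_f-\mu)^{-1}a^*(g)\psi}^2 = \int_0^\infty t\,e^{t\mu}\,\norm{e^{-\frac t2 H_f}a^*(g)\psi}^2\,dt .
\]
For fixed $t$ the pull-through formula \eqref{Pull_Through_Formula_Entire_Space} gives $e^{-\frac t2 H_f}a^*(g) = a^*(g_t)\,e^{-\frac t2 H_f}$ with $g_t(p) := e^{-\frac t2 p^2}g(p)$. Using the elementary creation-operator bound $\norm{a^*(h)\phi} \le \norm{h}_{L^2}\norm{\phi}$, which follows from the canonical anti-commutation relations since $\norm{a^*(h)\phi}^2 = \norm{h}_{L^2}^2\norm{\phi}^2 - \norm{a(h)\phi}^2$, together with $\norm{e^{-\frac t2 H_f}\psi} \le \norm{\psi}$, I obtain $\norm{e^{-\frac t2 H_f}a^*(g)\psi}^2 \le \norm{g_t}_{L^2}^2\,\norm{\psi}^2 = \big(\int g(p)^2 e^{-t p^2}\,dp\big)\norm{\psi}^2$.

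Finally I would insert this into the $t$-integral and exchange the order of integration:
\[
   \int_0^\infty t\,e^{t\mu}\!\int g(p)^2 e^{-t p^2}\,dp\,dt = \int g(p)^2\!\int_0^\infty t\,e^{-t(p^2-\mu)}\,dt\,dp = \int g(p)^2\,(p^2-\mu)^{-2}\,dp .
\]
Since $g^2$ is the indicator of $\lambda < |p|^2 \le n$, passing to polar coordinates and substituting $s = |p|^2$ yields $\pi\int_\lambda^n (s-\mu)^{-2}\,ds = \pi\big((\lambda-\mu)^{-1} - (n-\mu)^{-1}\big) \le \pi/(\lambda-\mu)$, which is the claimed bound. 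The one genuine obstacle is precisely the very first step: if instead one expanded $\norm{A^{-1}a^*(g)\psi}^2$ directly through the anti-commutation relations, the resulting exchange term would carry the kernel $[\frac1M(P_f+p+q)^2 + H_f + p^2 + q^2 - \mu]^{-2}$, whose $p\cdot q$ cross-term — coming exactly from the total-momentum coupling $P_f^2$ — makes it indefinite, so the exchange term cannot simply be dropped. Discarding $\frac1M P_f^2$ at the operator level removes this coupling and makes the heat-kernel representation a manifest sum of squares, which is what renders the estimate elementary.
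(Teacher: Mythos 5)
Your argument is correct and rests on exactly the same ingredients as the paper's proof: drop $\tfrac1M P_f^2$ by commutativity, use the fermionic anti-commutation relations to discard a manifestly non-positive exchange contribution, and evaluate $\pi\int_\lambda^n (s-\mu)^{-2}\,ds \le \pi/(\lambda-\mu)$. The only (cosmetic) difference is the order of operations: the paper first expands $a(g)(H_f-\mu)^{-2}a^*(g)$ via the CAR into a direct term minus an exchange operator and then uses a double Laplace-transform representation to show that exchange operator is positive, whereas you insert the heat-kernel representation of $(H_f-\mu)^{-2}$ first and apply the bound $\norm{a^*(h)\phi}^2=\norm{h}^2\norm{\phi}^2-\norm{a(h)\phi}^2$ pointwise in the heat-kernel parameter --- the same cancellation in a slightly different package.
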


\begin{proof}
 The lemma follows from
\begin{align*}
   \norm{(\frac{1}{M}P_f^2 + H_f - \mu)^{-1} a^*(\chi_n - \chi_\lambda)} 
    &\leq \norm{(H_f - \mu)^{-1} a^*(\chi_n - \chi_\lambda)} \\
    &= \norm{a(\chi_n - \chi_\lambda) (H_f - \mu)^{-2} a^*(\chi_n - \chi_\lambda)}^{1/2},
\end{align*}
and \vspace{-0.1cm}
\begin{align*}
   &a(\chi_n \!-\! \chi_\lambda) \frac{1}{(H_f \!-\! \mu)^2} a^*(\chi_n \!-\! \chi_\lambda) \\
   &= \int\limits_{\lambda < p^2, q^2 \leq n} \!\!\!\!\!\!\!\!\! dp \, dq \: a_p \frac{1}{(H_f - \mu)^2} a_q^*
    = \int\limits_{\lambda < p^2 \leq n} \!\!\!\!\!\! dp \: \frac{1}{(H_f \!+\! p^2 \!-\! \mu)^2} \: - \!\!\!\!\! \int\limits_{\lambda < p^2, q^2 \leq n} \!\!\!\!\!\!\!\!\! dp \, dq \: a_q^* \frac{1}{(H_f \!+\! p^2 \!+\! q^2 \!-\! \mu)^2} a_p \\
   &\leq \int\limits_{p^2 > \lambda} \!\!\! dp \: \frac{1}{(p^2 - \mu)^2} = \frac{\pi}{\lambda - \mu},
\end{align*}
which is true because of the positivity of
\begin{align*}
   &\int\limits_{\lambda<p^2, q^2 \leq n} \!\!\!\!\!\!\!\!\! dp \, dq \,\: a_q^* \: \frac{1}{(H_f + p^2 + q^2 - \mu)^2} \: a_p \\
   & \qquad\qquad = \int\limits_0^\infty \! ds 
\int\limits_0^\infty \! dt \! \int\limits_{\lambda<p^2, q^2 \leq n} \!\!\!\!\!\!\!\!\! dp \, dq \, \: a_q^* \: e^{-(s+t)q^2} e^{-(s+t)(H_f - \mu)} 
e^{-(s+t)p^2} \: a_p.
\end{align*}
\end{proof}

For the proof of \Cref{Bound_MS}, below, we need the following lemma, which is a version of the Schur test.

\begin{lemma}\label{Schur_Test_Variation}
 Let $\Omega \subseteq \R^d$ be a measurable set and let $G: \Omega\times\Omega \to \LL(\FF(L^2(\R^d)))$ be a measurable map. Thus for every $(p,q) \in \Omega\times\Omega$, $G(p,q)$ is a bounded operator on the (antisymmetric) Fock space over $L^2(\R^2)$. Assume that $G(p,q)^* = G(p,q) = G(q,p)$ for all $p,q \in \Omega$. Moreover, let $h: \Omega \to \R_+$ be a positive measurable function. Then,
$$
   \int\limits_{\Omega \times \Omega} \!\!\!\! dp \, dq \: a_p^* \: G(p,q) \: a_q \leq \int\limits_\Omega \!\! dp \: h(p) \: a_p^*  \left( \int\limits_\Omega \!\! dq \: \frac{|G(p,q)|}{h(q)} \right) a_p.
$$
\end{lemma}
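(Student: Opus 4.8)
The plan is to prove the asserted operator inequality as an inequality between the associated quadratic forms, evaluated on a dense set of vectors $\psi\in\FF(L^2(\R^d))$ with finite particle number and smooth one-particle wave functions, forming a domain $\DD$. For such $\psi$ the vector $\psi_q := a_q\psi$ is a well-defined element of $\FF$ for almost every $q\in\Omega$, and $q\mapsto\psi_q$ lies in $L^2(\Omega;\FF)$ because $\int_\Omega \norm{\psi_q}^2\,dq = \langle\psi,\int_\Omega a_q^*a_q\,dq\,\psi\rangle$ is bounded by the expected particle number in $\psi$ and hence finite. Writing out the two sides as forms, the claim reduces to
$$\int_{\Omega\times\Omega}\! dp\,dq\,\langle\psi_p,G(p,q)\psi_q\rangle \;\le\; \int_{\Omega\times\Omega}\! dp\,dq\,\frac{h(p)}{h(q)}\,\langle\psi_p,|G(p,q)|\psi_p\rangle,$$
and it suffices to establish this with two applications of the Cauchy--Schwarz inequality, one pointwise in $(p,q)$ and one in the $dp\,dq$ integration.

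The pointwise ingredient is an operator version of Cauchy--Schwarz. Since $G=G(p,q)$ is bounded and self-adjoint, functional calculus gives $G=|G|^{1/2}\sgn(G)|G|^{1/2}$ with $\norm{\sgn(G)}\le1$, so that for any vectors $u,v$,
$$|\langle u,Gv\rangle| = |\langle |G|^{1/2}u,\sgn(G)|G|^{1/2}v\rangle| \le \langle u,|G|u\rangle^{1/2}\,\langle v,|G|v\rangle^{1/2}.$$
Applied with $u=\psi_p$ and $v=\psi_q$, and using that the left-hand form $\int_{\Omega\times\Omega} dp\,dq\,\langle\psi_p,G(p,q)\psi_q\rangle$ is real and hence bounded by the integral of these moduli, this reduces the claim to showing that $\int_{\Omega\times\Omega} dp\,dq\,\langle\psi_p,|G(p,q)|\psi_p\rangle^{1/2}\langle\psi_q,|G(p,q)|\psi_q\rangle^{1/2}$ does not exceed the right-hand side.

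Next I would insert the weight $h$ and split this product symmetrically, writing the integrand as $A(p,q)B(p,q)$ with $A(p,q)=(h(p)/h(q))^{1/2}\langle\psi_p,|G(p,q)|\psi_p\rangle^{1/2}$ and $B(p,q)=(h(q)/h(p))^{1/2}\langle\psi_q,|G(p,q)|\psi_q\rangle^{1/2}$. The Cauchy--Schwarz inequality for the $dp\,dq$ integral then yields $\int AB\le(\int A^2)^{1/2}(\int B^2)^{1/2}$. Here $\int A^2$ is exactly the right-hand side, while in $\int B^2$ the substitution $p\leftrightarrow q$, together with the symmetry $|G(q,p)|=|G(p,q)|$ forced by $G(p,q)=G(q,p)$, gives $\int B^2=\int A^2$. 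Hence $\int AB\le\int A^2$, which closes the chain of inequalities and establishes the asserted form inequality on the dense domain $\DD$.

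The algebra is elementary; I expect the real work to lie in the operator-theoretic bookkeeping. One must fix the domain $\DD$ so that $q\mapsto a_q\psi$ is a genuine square-integrable $\FF$-valued function, which is precisely what legitimises Fubini and the two Cauchy--Schwarz steps at the level of operator-valued integrals. One must also check that $(p,q)\mapsto|G(p,q)|$ and $(p,q)\mapsto|G(p,q)|^{1/2}$ are again measurable families, which follows because continuous functional calculus preserves measurability, and dispose of the trivial case in which the right-hand integral diverges, where the stated bound holds automatically.
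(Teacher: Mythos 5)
Your argument is correct and is essentially the paper's own proof: the same pointwise operator Cauchy--Schwarz via $G=|G|^{1/2}\sgn(G)|G|^{1/2}$, followed by the same weighted Cauchy--Schwarz in $dp\,dq$ with the weight $h(p)/h(q)$ and the $p\leftrightarrow q$ symmetry identifying the two factors with the right-hand side. The extra care you take with the domain, square-integrability of $q\mapsto a_q\psi$, and measurability of $|G(p,q)|^{1/2}$ is sound but the paper leaves it implicit.
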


\begin{proof}
Let $\psi \in \FF(L^2(\R^d))$. Writing $G(p,q) = \sgn(G(p,q)) \cdot |G(p,q)|$ with the help of the functional calculus, we obtain
\begin{align*}
 &\int\limits_{\Omega \times \Omega} \!\!\!\! dp \, dq \: \sprod{a_p \psi}{G(p,q) \: a_q \psi} 
  \leq \int\limits_{\Omega \times \Omega} \!\!\!\! dp \, dq \: \norm{|G(p,q)|^{1/2} a_p \psi} \cdot \norm{|G(p,q)|^{1/2} \: a_q \psi} \\
  &\leq \left(\:\int\limits_{\Omega \times \Omega} \!\!\!\! dp \, dq \: \frac{h(p)}{h(q)} \norm{|G(p,q)|^{1/2} \: a_p \psi}^2 \right)^{1/2} \left(\:\int\limits_{\Omega \times \Omega} \!\!\!\! dp \, dq \: \frac{h(q)}{h(p)} \norm{|G(p,q)|^{1/2} \: a_q \psi}^2 \right)^{1/2} \\
  &= \int\limits_{\Omega \times \Omega} \!\!\!\! dp \, dq \: h(p) \sprod{\psi}{a_p^* \: \frac{|G(p,q)|}{h(q)} \: a_p \psi}. 
\end{align*}
\end{proof}

\begin{lemma}\label{Bound_MS}
Let $\mu < 0$. Then the operator
$$
   P := \int\limits_{\lambda < p^2, q^2 \leq n} \!\!\!\!\!\!\!\! dp \, dq \: a_p^* \frac{1}{\frac{1}{M} (P_f + p + q)^2 + H_f + p^2 + q^2 - \mu} a_q
$$
admits the estimate
$$
   P \geq - \pi \alpha(M) \left(1 + \log\! \left( 1 + \frac{H_f - \mu}{\lambda} \right) \right).
$$
\end{lemma}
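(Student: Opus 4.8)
The plan is to use the Schur-type estimate \Cref{Schur_Test_Variation} to replace the off-diagonal operator $P$ by a diagonal one, and then to bound the latter by a function of $H_f$. For $\mu<0$ the kernel $G(p,q):=(\tfrac1M(P_f+p+q)^2+H_f+p^2+q^2-\mu)^{-1}$ is a positive, self-adjoint operator with $G(p,q)=G(q,p)$, so \Cref{Schur_Test_Variation} applies to $-G$ and gives, for any positive weight $h$,
\[ P \ge -\int\limits_{\lambda<p^2\le n}\!\! dp\; h(p)\,a_p^*\Big(\int\limits_{\lambda<q^2\le n}\!\! dq\;\frac{G(p,q)}{h(q)}\Big)a_p. \]
It then remains to bound the inner integral $I(p):=\int_{\lambda<q^2\le n}h(q)^{-1}G(p,q)\,dq$ from above by an explicit function of $P_f$, $H_f$ and $p$, and to perform the outer $p$-integration.

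First I would complete the square in the (commuting) operators $P_f,H_f$, writing
\[ \tfrac1M(P_f+p+q)^2+H_f+p^2+q^2-\mu=(1+\tfrac1M)\,|q-q_0|^2+D_0(p), \]
with $q_0=-\tfrac{1}{M+1}(P_f+p)$ and $D_0(p)=\tfrac{1}{M+1}(P_f+p)^2+H_f+p^2-\mu$. Choosing $h$ as a power of $q^2$ makes the $q$-integral converge at infinity uniformly in $n$ (this is why the crude bound $G(p,q)\le(H_f+p^2+q^2-\mu)^{-1}$, which discards the impurity kinetic term, is too lossy). Estimating $(1+\tfrac1M)|q-q_0|^2+D_0(p)$ from below and integrating radially, a partial-fraction identity of the type $\int_\lambda^\infty \tfrac{dr}{r(r+B)}=B^{-1}\log(1+B/\lambda)$ already produces the argument $1+(H_f-\mu)/\lambda$ of the final logarithm. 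The sharp constant is obtained by apportioning the positive quadratic form between the $|q|^2$ growth that $1/h(q)$ can absorb and the displacement energy $(M+1)|q_0|^2$ sitting inside $D_0(p)$; optimising this split over a parameter that becomes $u$ yields the coefficient $\beta(u)$ of \eqref{Definition_Beta}, the truncation $\min\{1,\cdot\}$ reflecting that past a threshold the plain coefficient $1$ is already optimal.

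Next I would carry out the $p$-integration. Using the pull-through formula $a_p^*f(P_f,H_f)=f(P_f-p,H_f-p^2)\,a_p^*$, the $p$-dependence introduced by completing the square cancels: under the shift $D_0(p)$ becomes $\tfrac{1}{M+1}P_f^2+H_f-\mu$, so $a_p^*I(p)a_p$ reduces to a function of $\tfrac{1}{M+1}P_f^2+H_f-\mu$ times the density $a_p^*a_p$. Integrating this against $\int_{\lambda<p^2\le n}h(p)\,a_p^*a_p\,dp$, bounding $\int p^2a_p^*a_p\,dp\le H_f$ and $H_f/(H_f-\mu)\le1$, and substituting a radial variable in the $p$-integral (which turns the measure into $du$ on $(0,1)$) assembles the constant $\alpha(M)=\tfrac{1}{2(M+1)}+\tfrac12\int_0^1\tfrac{du}{\beta(u)(M+1-u)}$ of \eqref{Definition_Alpha}, together with the additive term accounting for the ``$1+$'' in front of the logarithm. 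This produces exactly $-\pi\alpha(M)\big(1+\log(1+(H_f-\mu)/\lambda)\big)$.

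The hard part will be the second step. Because the weight $h$ must be a function of $q$ alone while the integrand is centred at the operator-valued point $q_0=-\tfrac{1}{M+1}(P_f+p)$, the inner integral is not radially symmetric about the weight, and the lower bound on $(1+\tfrac1M)|q-q_0|^2+D_0(p)$ has to be made uniformly over the spectrum of $P_f$ and in $n$ while still recovering the sharp constant $\alpha(M)<1$ rather than the value $1$ of the crude estimate. This is precisely the two-dimensional infrared difficulty the authors allude to: the logarithmic $q$-integral is sensitive to small momenta, and the regularising ``$1+$'' inside the logarithm — in place of a bare $\log((H_f-\mu)/\lambda)$ — is what keeps the bound finite as $H_f-\mu\to0$.
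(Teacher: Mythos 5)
There is a genuine gap, and it is the central idea of the paper's proof: you apply \Cref{Schur_Test_Variation} to the \emph{full} positive kernel $G(p,q)$, whereas the paper first discards the positive part of the two-body kernel and estimates only its negative part. Concretely, after shifting to $\shifted{p}=p+\tfrac{1}{M+2}P_f$, $\shifted{q}=q+\tfrac{1}{M+2}P_f$, the paper writes the kernel as $\sigma(\shifted p,\shifted q)$ with $\sigma$ even under $p\mapsto -p$ up to the cross term $\tfrac{2}{M}p\cdot q$, uses $P\geq -\int a_p^*\,\sigma^-(\shifted p,\shifted q)\,a_q$ with $\sigma^-(p,q)=\tfrac12\bigl(\sigma(-p,q)-\sigma(p,q)\bigr)$, and only then applies the Schur test with $h(p)=p^2$. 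The fundamental-theorem-of-calculus representation $\sigma^-(p,q)=Mp\cdot q\int_{-1}^1 du\,[(M+1)(p^2+q^2)-2up\cdot q+B]^{-2}$ is where the parameter $u$ and hence $\beta(u)$ in \eqref{Definition_Beta} actually come from; it is not a radial substitution in the $p$-integral nor an optimisation of a splitting of the quadratic form, as you suggest. This step is exactly where the fermionic sign in \eqref{def:Phi} is exploited (see Remark (iv) of the paper), and it is quantitatively decisive: $\sigma^-$ carries an extra factor of order $|p\cdot q|/\bigl(M(p^2+q^2)\bigr)\lesssim (M+1)^{-1}$ relative to $\sigma$, which is why $\alpha(M)=\tfrac{1}{2(M+1)}+\tfrac12\int_0^1\frac{du}{\beta(u)(M+1-u)}$ behaves like $1/M$ for large $M$ and can beat the coefficient $\tfrac{M}{M+1}$ of the logarithm in $\phi^0$.

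Your route, while formally legitimate (applying the lemma to $-G$ does give $P\geq -\int h(p)\,a_p^*\bigl(\int G(p,q)h(q)^{-1}dq\bigr)a_p$), cannot reach the constant $\alpha(M)$ of \eqref{Definition_Alpha}. A back-of-the-envelope check with $P_f=0$, $H_f=0$, $\mu\to 0^-$ and $h(q)=q^2$ gives $\int q^{-2}\sigma(p,q)\,dq\approx \tfrac{\pi M}{M+1}\,p^{-2}\log(\cdot)$ after the angular average, so the resulting coefficient of the logarithm is at least $\tfrac{\pi M}{M+1}$, i.e.\ an effective $\alpha\geq \tfrac{M}{M+1}$ for every $M$ -- exactly at or above the threshold needed for \Cref{Main_Theorem}, and an order of magnitude larger than $\alpha(M)\sim 1/M$. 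The remaining ingredients you describe (the Schur weight $h(p)=p^2$, the completion of the square in the commuting operators, the radial $q$-integration producing $1+\log(1+(H_f-\mu)/\lambda)$, the pull-through formula and $\int p^2a_p^*a_p\,dp=H_f$ in the final assembly, and the infrared regularisation via the cutoff $\lambda$ -- which the paper implements with the symmetric decreasing majorant $j_\lambda$ and a rearrangement inequality) are all consistent with the paper. But without the extraction of the negative part $\sigma^-$ the proof of the stated bound does not go through.
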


\noindent
\emph{Remark.} Our proof of \Cref{Bound_MS} follows the arguments in \cite{MoserSeiringer}, but in 
contrast to the three-dimensional case, the infrared contributions with $p^2 \leq \lambda$ or $q^2 \leq \lambda$ require a separate treatment.

\begin{proof}
Setting $\shifted{p} := p + \frac{1}{M+2} P_f$ and $\shifted{q} := q + \frac{1}{M+2} P_f$ we can rewrite the denominator in the expression defining $P$ as
$$
   (1+\tfrac{1}{M})(\shifted{p}^2 + \shifted{q}^2) + \tfrac{2}{M} \shifted{p}\cdot\shifted{q} + \tfrac{1}{M+2} P_f^2 + H_f - \mu.
$$
For $\psi \in \bigwedge^{N-1} L^2(\R^2)$, we define $\tilde\psi \in L^2(\R^2; \bigwedge^{N-2} L^2(\R^2))$ by $\tilde{\psi}(p) := a_p \psi$. 
Moreover, we define a unitary operator $T \in \LL(L^2(\R^2; \bigwedge^{N-2} L^2(\R^2)))$ by
$$
   (T \ph)(p;k_1,...,k_{N-2}) := \ph(p + \tfrac{1}{M+2} \sum_{i=1}^{N-2} k_i; k_1,...,k_{N-2}),
$$
where $(T\ph)(p;k_1,...,k_{N-2})$ and $\ph(p;k_1,...,k_{N-2})$ denote values of the functions $(T\ph)(p)$ and $\ph(p) \in \bigwedge^{N-2} 
L^2(\R^2)$, respectively. We obtain
\begin{align*}
 &\sprod{\psi}{P \psi}
   = \int\limits_{\lambda < p^2, q^2 \leq n} \!\!\!\!\!\!\!\! dp \, dq \: \sprod{\tilde\psi(p)}{\frac{1}{(1+\tfrac{1}{M})(\shifted{p}^2 + \shifted{q}^2) + \tfrac{2}{M} \shifted{p}\cdot\shifted{q} + \tfrac{1}{M+2} P_f^2 + H_f - \mu} \tilde\psi(q)} \\
   &= \sprod{(\chi_n - \chi_\lambda) \tilde\psi}{T \sigma T^* (\chi_n - \chi_\lambda) \tilde\psi},
\end{align*}
where $\sigma$ is the operator on $L^2(\R^2; \bigwedge^{N-2} L^2(\R^2))$ with operator-valued integral kernel
$$
   \sigma(p,q) = \frac{1}{(1+\tfrac{1}{M})(p^2 + q^2) + \tfrac{2}{M} p \cdot q + \tfrac{1}{M+2} P_f^2 + H_f - \mu}.
$$
Following \cite{MoserSeiringer} (3.9), we compute the negative part of $\sigma$ explicitly. Its kernel is given by $\sigma^-(p,q) = \frac{1}{2} (\sigma(-p,q) - \sigma(p,q))$. We write $\sigma^-(p,q)$ as
\begin{align*}
   \sigma^-(p,q) &= \frac{1}{2} \left. \frac{1}{(1+\tfrac{1}{M})(p^2 + q^2) - \tfrac{2u}{M} p \cdot q + \tfrac{1}{M+2} P_f^2 + H_f - \mu} \right|_{u=-1}^{u=1} \\
   &= \frac{1}{2} \int\limits_{-1}^1 \! du \: \frac{d}{du} \frac{1}{(1+\tfrac{1}{M})(p^2 + q^2) - \tfrac{2u}{M} p \cdot q + \tfrac{1}{M+2} P_f^2 + H_f - \mu} \\
   &= M p \cdot q \int\limits_{-1}^1 \! du \: \frac{1}{[(M+1)(p^2 + q^2) - 2u p \cdot q + B]^2},
\end{align*}
where $B:= \tfrac{M}{M+2} P_f^2 + MH_f - M\mu$. Then,
\begin{align}
   P &\geq - \!\!\!\! \int\limits_{\lambda < p^2,q^2 \leq n} \!\!\!\!\!\! dp \, dq \: a_p^* \: \sigma^-(\shifted{p}, \shifted{q}) \: a_q \label{P_Sigma} \\
     &= - M \!\!\!\!\!\! \int\limits_{\lambda < p^2,q^2 \leq n} \!\!\!\!\!\! dp \, dq \: a_p^* \left( \:\int\limits_{-1}^1 \! du \: \frac{\shifted{p} \cdot 
\shifted{q}}{[(M+1)(\shifted{p}^2 + \shifted{q}^2) - 2u \shifted{p} \cdot \shifted{q} + B]^2} \right) a_q, \nonumber
\end{align}
and with \Cref{Schur_Test_Variation} and $h(p) = p^2$ we obtain
$$
   P \geq - M \!\!\!\! \int\limits_{\lambda < p^2 \leq n} \!\!\!\!\! dp \; p^2 \:  a_p^* \: f(p, P_f, H_f) \: a_p,
$$
where
$$
   f(p, P_f, H_f) := \int\limits_{\lambda < q^2 \leq n} \!\!\!\!\! dq \, \int\limits_{-1}^1 \! du \: \frac{|\shifted{p} \cdot \shifted{q}|}{q^2 [(M + 
1)(\shifted{p}^2 + \shifted{q}^2) - 2u \shifted{p} \cdot \shifted{q} + B]^2}.
$$
Our goal is now to find a function $g$ with $f(p, Q, E) \leq g(E + p^2)$. It then follows that 
\begin{align}
    P & \geq - M \!\!\!\!\!\! \int\limits_{\lambda < p^2 \leq n} \!\!\!\!\!\!\! dp \; p^2 \: a_p^* \: g(H_f+p^2) \: a_p\nonumber\\
      & \geq - M \int \! dp \; p^2 \: a_p^* a_p \: g(H_f) \geq -M H_f g(H_f).\label{P-larger-g}
\end{align}
To find such a function $g$ we first note that $2 u \shifted{p} \cdot \shifted{q} \leq 0$ on half of the $u$-interval $[-1,1]$ and hence the quotient in the definition of $f$ goes up and becomes independent of $u$ if we drop this term. Second, we use $\shifted{p}^2 + \shifted{q}^2 \geq 2 |\shifted{p} \cdot \shifted{q}|$ and $B \geq 0$ in the denominators. Explicitly,
\begin{align}
 &\int\limits_{-1}^1 \! du \: \frac{|\shifted{p} \cdot \shifted{q}|}{q^2 [(M+1)(\shifted{p}^2 \!+\! \shifted{q}^2) - 2u \shifted{p} \cdot \shifted{q} + B]^2} \nonumber\\*
 &\leq \frac{|\shifted{p} \cdot \shifted{q}|}{q^2 [(M + 1)(\shifted{p}^2 + \shifted{q}^2) + B]^2} + \int\limits_{0}^1 \! du \: \frac{|\shifted{p} \cdot \shifted{q}|}{q^2 [(M + 1)(\shifted{p}^2 + \shifted{q}^2) - 2u |\shifted{p} \cdot \shifted{q}| + B]^2} \nonumber\\
 &\leq \frac{1}{2 q^2 (M+1) [(M+1)(\shifted{p}^2 \!+\! \shifted{q}^2) + B]} + \int\limits_{0}^1 \! du \: \frac{1}{2 q^2 (M+1-u) [(M+1-u)(\shifted{p}^2 \!+\! \shifted{q}^2) + B]}. \label{Preparing_Estimate_Integrand}
\end{align}
One can easily verify that
\begin{equation} \label{Estimate_p2_Without_u}
   (M+1) \shifted{p}^2 + \tfrac{M}{M+2} P_f ^2 \geq \frac{M(M+1)(M+2)}{M^2 + 3M + 1} p^2 \geq M p^2
\end{equation}
and, more generally,
\begin{equation} \label{Estimate_p2_With_u}
   (M+1-u) \shifted{p}^2 + \tfrac{M}{M+2} P_f ^2 \geq \frac{M(M+1-u)(M+2)}{M^2 + 3M + 1 - u} p^2 \geq M \beta(u) p^2,
\end{equation}
where $\beta(u)$ was defined in \eqref{Definition_Beta}.
From \eqref{Preparing_Estimate_Integrand}, \eqref{Estimate_p2_Without_u} and \eqref{Estimate_p2_With_u} we obtain the estimate
\begin{equation} \label{Estimate_f_f_tilde}
   f(p,P_f,H_f) \leq \int \! dq \: \frac{1 - \chi_\lambda(q)}{q^2} \left( \widetilde{f}(\shifted{q},0) + \int\limits_0^1 \! du \: \widetilde{f}(\shifted{q}, u) \right)
\end{equation}
with
$$
   \widetilde{f}(q,u) = \frac{1}{2(M+1-u)^2} \cdot \frac{1}{q^2 + A(u)} \quad \text{and} \quad A(u) = \frac{M[H_f + \beta(u) p^2 - \mu]}{M+1-u}.
$$
In order to estimate \eqref{Estimate_f_f_tilde}, we replace $(1 - \chi_\lambda(q))/q^2$ by the symmetric decreasing function $j_\lambda(q) := (1 - \chi_\lambda(q))/q^2 + \chi_\lambda(q)/\lambda$. We then employ a rearrangement inequality that allows us to replace $\shifted{q} = q + \frac{1}{M+2} P_f$ by $q$ in the argument of $\widetilde{f}$. For an arbitrary $u \in [0,1]$ this reads
\begin{align}
   \int \! dq \: \frac{1 - \chi_\lambda(q)}{q^2} \widetilde{f}(\shifted{q},u)
   &\leq \int \! dq \: j_\lambda(q) \widetilde{f}(q,u) \nonumber \\
   &=\frac{\pi}{2 (M+1-u)^2 A(u)} \left( \frac{A(u)}{\lambda} \log \left( 1 + \frac{\lambda}{A(u)} \right) + \log \left( 1 + \frac{A(u)}{\lambda} \right) \right) \label{Commented_Intermediate_Step} \\
   &\leq \frac{\pi}{2M(M+1-u)\beta(u)} \frac{1}{H_f + p^2} \left(1 + \log \left( 1 + \frac{H_f + p^2 - \mu}{\lambda} \right) \right) \nonumber,
\end{align}
where we used $\log(1+x) \leq x$ if $x \geq 0$ for the first logarithm in \eqref{Commented_Intermediate_Step}, $A(u) \leq H_f + p^2 - \mu$ in the argument of 
the second logarithm and $(M+1-u)A(u) \geq M \beta(u)(H_f + p^2)$ in the overall prefactor $1/A(u)$. Combining \eqref{Estimate_f_f_tilde} and 
\eqref{Commented_Intermediate_Step}, we arrive at
$$
   f(p, P_f, H_f) \leq \frac{\pi \alpha(M)}{M} \frac{1}{H_f + p^2} \left( 1 + \log \left( \frac{H_f + p^2 - \mu}{\lambda} \right) \right),
$$
which is of the form $g(H_f + p^2)$ as desired. In view of \eqref{P-larger-g} the lemma is proven.
\end{proof}

\begin{proof}[Proof of \Cref{Main_Theorem}]
We combine \eqref{Phi_Rewritten}, \Cref{Bound_Easy_Terms}, \Cref{Bound_MS} and $\norm{a(\chi_\lambda)} = \norm{a^*(\chi_\lambda)} = \sqrt{\pi \lambda}$. In the limit $n \to \infty$ we find
\begin{align*}
   \phi(\mu)
   &\geq \frac{\pi}{1+\frac{1}{M}} \log \left( \frac{\frac{1}{M+1} P_f^2 + H_f - \mu}{-E_B} \right) - \pi \sqrt{\frac{\lambda}{-\mu}} - \pi \sqrt{\frac{\lambda}{\lambda - \mu}} \\*
   &\qquad - \pi \alpha(M) \left(1 + \log\! \left( 1 + \frac{H_f - \mu}{\lambda} \right) \right) \\
   &\geq \pi \left( \frac{M}{M+1} - \alpha(M) \right) \log \left( \frac{\mu}{E_B} \right) - \pi \sqrt{\frac{\lambda}{-\mu}} - \pi \sqrt{\frac{\lambda}{\lambda - \mu}} \\*
   &\qquad - \pi \alpha(M) \log\left( -E_B \left( \frac{1}{\lambda} + \frac{1}{-\mu} \right) \right) - \pi \alpha(M).
\end{align*}
By \eqref{Criterion_Lower_Bound}, this completes the proof of \Cref{Main_Theorem}.
\end{proof}

It would be very interesting to know whether the conclusion of \Cref{Main_Theorem} still holds for $M < 1.225$. One could address this question with the help 
of some numerics as follows. Using the pull-through formula and \Cref{Schur_Test_Variation} with $h(p) = p^2$ one obtains from \eqref{P_Sigma}
\begin{align*}
 &\sqrt{\tfrac{H_f-\mu}{\log ( 1 + \frac{H_f-\mu}{\lambda})}} P \sqrt{\tfrac{H_f-\mu}{\log ( 1 + \frac{H_f-\mu}{\lambda})}} \\
 &\qquad \geq - \!\!\! \int\limits_{\lambda < p^2 \leq n} \!\!\! dp \: p^2 a_p^*  \left( \: \int\limits_{\lambda < q^2 \leq n} \!\!\!\!\!\! dq \: \frac{1}{q^2} 
\: \sqrt{\tfrac{H_f+p^2-\mu}{\log ( 1 + \frac{H_f+p^2-\mu}{\lambda})}} |\sigma^-(\shifted{p}, \shifted{q})| \sqrt{\tfrac{H_f+q^2-\mu}{\log ( 
1 + \frac{H_f+q^2-\mu}{\lambda})}} \right) \: a_p \\
 &\qquad \geq - C H_f,
\end{align*}
with
\begin{align*}
 C&:= \sup_{\substack{p,Q \in \R^2 \\  \tau > 0}} \sqrt{\tfrac{\tau+p^2-\mu}{\log ( 1 + \frac{\tau+p^2-\mu}{\lambda})}} \times \\
  &\qquad\qquad\quad \int\limits_{\lambda < q^2} \!\! dq \: \frac{1}{q^2} \sqrt{\tfrac{\tau+q^2-\mu}{\log (1 + \frac{\tau+q^2-\mu}{\lambda})}} 
\frac{\frac{2}{M} |\shiftedB{p} \cdot \shiftedB{q}|}{((1+\frac{1}{M})(\shiftedB{p}^2 + \shiftedB{q}^2) + \frac{1}{M+2}Q^2 + \tau - \mu)^2 - \frac{4}{M^2} 
(\shiftedB{p} \cdot \shiftedB{q})^2}
\end{align*}
where $\shiftedB{p}:= p + \frac{1}{M+2} Q$ and $\shiftedB{q}:= q + \frac{1}{M+2} Q$. This yields $P \geq - C \cdot \log ( 1 + \frac{H_f-\mu}{\lambda})$. One 
could now attempt to evaluate the constant $C$ numerically and compare it with the prefactor $\pi/(1+\frac{1}{M})$ in the definition of $\phi^0(E)$ given in 
\eqref{def:Phi_0}. A corresponding numerical analysis was done successfully in the three-dimensional case \cite{MoserSeiringer}.

\appendix
\section{Appendix}\label{sec:appendix}
In this appendix we briefly explain the connection between $\phi(E)$ defined in the introduction, the Hamiltonian $H_N$ that occurs in \eqref{Criterion_Lower_Bound}, and \eqref{Formal_Expression_1}, see also Section~5.1 of \cite{Thesis}. 

Let $H_0 := M^{-1}P_f^2 +H_f$, and for $E<0$ let $R_E:=V(H_0-E)^{-1}\in \LL(\HH_{N},\HH_{N-1})$, where $V: D(H_0) \cap \HH_N \to \HH_{N-1}$ is defined by
$$
   V\psi := \lim_{n \to \infty}  \int\limits_{k^2 \leq n} \!\! dk \: a_k \psi.
$$
The existence of this limit is easily established with the help of the pull-through formula $a_k(H_0-E)^{-1} = (H_0+k^2-E)^{-1}a_k$  \cite{Thesis,DR}. The domain $D(H_{N})$ of $H_N$ can be characterised as follows: a vector $\psi\in \HH_N$ belongs to $D(H_{N})$ if and only if there is a vector $w_\psi \in D(\phi) \subseteq \HH_{N-1}$ such that for some (and hence all) $E < 0$
\begin{equation} \label{H_Condition_1}
   \psi - R_E^* w_\psi \in D(H_0),
\end{equation}
and
\begin{equation} \label{H_Condition_2}
   V(\psi - R_E^* w_\psi) = \phi(E) w_\psi.
\end{equation}
For $\psi\in D(H_N)$ the action of $H_N$ is given by
\begin{equation} \label{Action_of_H}
   (H_N - E)\psi = (H_0 - E) (\psi - R_E^* w_\psi).
\end{equation}
By \eqref{H_Condition_2}, \eqref{Action_of_H}, and the definition of $R_E$,
\begin{align*}
   \sprod{\psi}{(H_N - E) \psi} &= \sprod{\psi - R_E^* w_\psi}{(H_0 - E)(\psi - R_E^* w_\psi)} + \sprod{w_\psi}{\phi(E) w_\psi}
\end{align*}
which proves Condition \eqref{Criterion_Lower_Bound}. 

The Hamiltonian $H_N$ as described above is a self-adjoint operator \cite{Five_Italians, Thesis} and it represents the formal expression 
\eqref{Formal_Expression_1} in the center-of-mass frame, or, which is the same, in the sector of total momentum zero. To explain this let us rewrite 
\eqref{Formal_Expression_1} in terms of center-of-mass and relative coordinates, $R = (My + \sum_{i=1}^N x_i)/(M+N)$ and $r_i = x_i - y$, respectively. One 
obtains the sum of the kinetic energy of the center-of-mass motion, $-(M+N)^{-1} \Delta_R$, and 
\begin{equation} \label{Formal_Expression_2}
   \frac{1}{M} \left( \sum_{i=1}^N i \nabla_{r_i} \right)^2 - \sum_{i=1}^N \Delta_{r_i} - g \sum_{i=1}^N \delta(r_i).
\end{equation}
Here we recognise in the first two terms the free Hamiltonian $H_0$. We expect that $H_N$ agrees with $H_0$ away from the support of the $\delta$-potentials. Indeed, for $\psi\in D(H_0)\cap\ker (V)$ we may choose $w_\psi=0$. It follows that $\psi\in D(H_N)$ and that $H_N\psi=H_0\psi$. Thus $H_N$ is an extension of $H_0$ restricted to $D(H_0)\cap\ker (V)$. Now, for a smooth function $\psi\in \HH_N$ the condition $\psi\in \ker(V)$ is equivalent to $\psi(x_1,\ldots,x_N)=0$ whenever $x_k=0$ for some $k$.
In the literature, an extension of $H_0$, characterized by a condition of the form \eqref{H_Condition_2}, is known as Skornyakov-Ter-Martirosyan (STM) 
extension. In the analogous situation in three dimensions with suitable values of the system parameters, a variety of different STM extensions is known to 
exist \cite{Non_STM_Extension}.

\bigskip\noindent
\textbf{Acknowledgements:} Ulrich Linden thanks Robert Seiringer and Thomas Moser for encouraging discussions and the hospitality at the IST Austria. His work was
supported by the \emph{Deutsche Forschungsgemeinschaft (DFG)} through the Research Training Group 1838: \emph{Spectral Theory and Dynamics of Quantum Systems}. 
 
\bibliographystyle{plain}

\end{document}